\documentclass[10pt, final, twocolumn, twoside, romanappendices]{IEEEtran}
\usepackage{xcolor}
\usepackage{mathtools}
\usepackage{epsfig,makeidx,color}
\usepackage{amsmath,amssymb,bbm,enumitem}
\usepackage{mathrsfs} 
\usepackage{cite,graphicx,lipsum}
\usepackage[switch,pagewise]{lineno}
\usepackage{hyperref}
\usepackage[caption=false,font=footnotesize]{subfig}
\usepackage{algorithm}
\usepackage{algpseudocode}
\usepackage{acronym}
\usepackage{array}    
\usepackage{booktabs}
\hypersetup{
        colorlinks = true,
        citecolor=red,
}
\usepackage{tikz}
\usepackage{amsthm}

\newtheorem{proposition}{Proposition}

\def\be{ \begin{equation} }
\def\ee{ \end{equation} }
\def\bea{ \begin{eqnarray} }
\def\eea{ \end{eqnarray} }

\def\b0{{\bf 0}}

\acrodef{HO}{handover}
\acrodef{TX}{transmitter}
\acrodef{RX}{receiver}
\acrodef{cHO}{cache-based \ac{HO}}
\acrodef{tHO}{token-based \ac{HO}}
\acrodef{ctHO}{joint cache-token based \ac{HO} framework}
\acrodef{BS}{base station}
\acrodef{UE}{user equipment}
\acrodef{LLM}{large language model}
\acrodef{SNR}{signal-to-noise ratio}

\ifCLASSOPTIONonecolumn
\else

\fi

\usepackage{geometry}
\graphicspath{ {./images/} }
\usepackage{graphicx}
\begin{document}
\title{Low-Latency Edge LLM Handover via \\Joint KV Cache Transfer and Token Prefill}



\author{
   \vspace{0.0cm}
    Seunghun~Lee,~\IEEEmembership{Graduate Student Member,~IEEE},
    Jihong~Park,~\IEEEmembership{Senior~Member,~IEEE},\\
    Ce Zheng,~\IEEEmembership{Member,~IEEE},
    and Hyuncheol~Park,~\IEEEmembership{Senior~Member,~IEEE}

    \thanks{
        S. Lee and H. Park are with
        the School of Electrical Engineering,
        Korea Advanced Institute of Science and Technology,
        Daejeon 34141,
        Republic of Korea
        (e-mail: {seunghun21@kaist.ac.kr; hcpark@kaist.ac.kr}).

        J. Park is with
        ISTD Pillar, Singapore University of Technology and Design (SUTD),
        8 Somapah Rd, Singapore 487372, 
        Singapore
        (e-mail: {jihong\_park@sutd.edu.sg}).

        C. Zheng is with the Department of Broadband Communication, 
        Pengcheng Lab,
        Shenzhen 518066, China
        (e-mail: {zhengc@pcl.ac.cn}).
    }
    \thanks{
This work was supported in part by A$^\ast$STAR under its IAF-ICP (I2501E0064), in part by the IITP-ITRC grant funded by the Korean government (MSIT) (IITP-2026-RS-2023-00259991) (33\%), in part by SUTD Kickstarter Initiative (SKI 2021\_06\_08), and in part by the National Research Foundation, Singapore, and the Infocomm Media Development Authority under its Future Communications Research \& Development Programme.
     (\textit{Corresponding authors: 
     J. Park and H. Park})
     }

\vspace{-0.8cm}
}

\maketitle
\begin{abstract} 
Edge deployment of large language models (LLMs) can reduce latency for interactive services, but mobility introduces service interruptions when an user equipment (UE) hands over between base stations (BSs). To promptly resume decoding, the target-side edge server must recover the UE context state, which can be provisioned either by token forwarding followed by prefill computation or by direct key-value (KV) cache transmission over backhaul. This paper proposes a unified handover (HO) design that jointly selects the prefill length and schedules backhaul KV cache delivery to minimize the worst-user LLM HO delay for multiple UEs. The resulting scheme admits a tractable step-wise solution with explicit feasibility conditions and a constructive rate-scheduling policy. Simulations show that the proposed method consistently outperforms baselines across a wide range of backhaul capacities, prefill speeds, and context sizes, providing practical guidelines for mobility-aware Edge LLM token streaming.

\end{abstract}

\begin{IEEEkeywords}
Token Streaming, Edge LLM, Key-Value Cache.
\end{IEEEkeywords}

\ifCLASSOPTIONonecolumn
\baselineskip 28pt
\fi
\section{Introduction}

\IEEEPARstart{L}{arge} language model (LLM) streaming has recently emerged as a key application for beyond 5G networks, with billions of users already accessing services such as ChatGPT and Gemini via mobile devices. Existing LLM streaming services are predominantly cloud-based, which can incur significant and time-varying latency over wireless links \cite{Yet2026}. To enable low-latency and differentiated services, deploying LLMs at the network edge, referred to as \emph{Edge LLM}, has recently been recognized as a promising approach \cite{Jang25,WG3}.

However, provisioning seamless Edge LLM streaming for mobile users remains challenging due to token decoding dependency during \emph{Edge LLM \ac{HO}}. Specifically, LLM inference is autoregressive, where each token is generated conditioned on the key-value (KV) cache of previously decoded tokens. When a mobile user is handed over between \acp{BS} equipped with Edge LLMs, the target BS has not decoded the user’s past tokens, and therefore cannot immediately preserve the token streaming context.

A straightforward solution is to transfer past tokens to the target BS and re-decode them to reconstruct the KV cache. The KV cache reconstruction of this \emph{\ac{tHO}} is equivalent to the prefill phase of LLM inference, which is computationally intensive and incurs a large time-to-first-token (TTFT), resulting in substantial HO delay. Furthermore, prefill is often batched across users, making the worst-user delay the bottleneck under simultaneous HOs.
Alternatively, BSs can transfer the KV cache directly over data backhaul links. This \emph{\ac{cHO}} can significantly reduce Edge LLM HO delay without re-decoding past tokens. However, the KV cache can be large (e.g., GB-scale for billion-parameter models), and limited backhaul capacity may constrain inter-BS KV cache transfer, particularly when multiple HOs occur simultaneously.

Motivated by these challenges, we propose a \emph{\ac{ctHO}} for Edge LLM systems, in which token-based partial KV cache reconstruction is performed concurrently with the remaining KV cache transfer as shown in Fig.~\ref{fig:schematic_illustration}. This joint design minimizes the worst-user HO delay under backhaul capacity constraints.

\begin{figure}[t]
\centering
\includegraphics[width=\columnwidth]{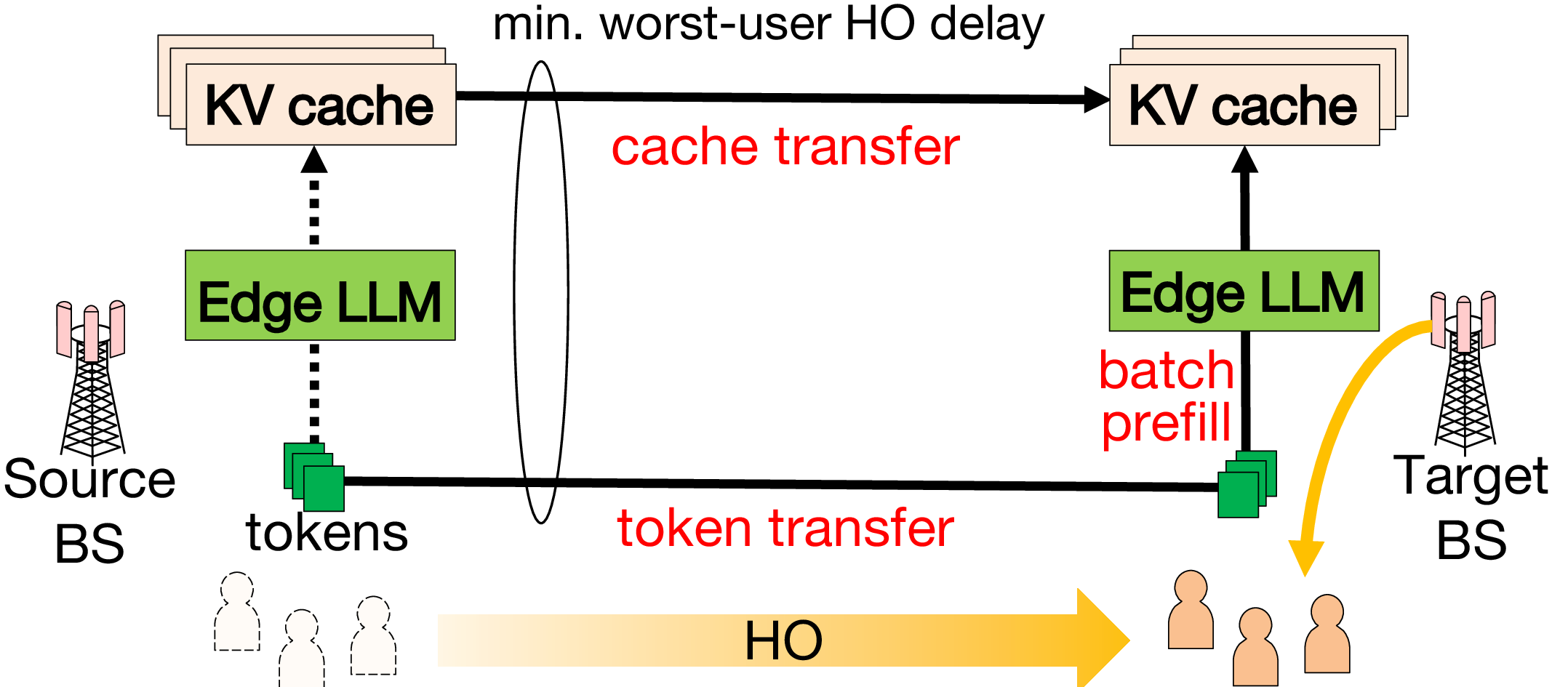}
\vspace{-0.8cm}
\caption{An illustration of the proposed \ac{ctHO}, which jointly exploits batch prefill, and KV cache transfer to minimize the worst-user \ac{HO} delay.}
\label{fig:schematic_illustration}
\end{figure}




The main contributions of this paper are as follows:
\begin{itemize}
\item We propose ctHO that minimizes Edge LLM \ac{HO} delay by optimizing the length of the prefilled KV cache and the backhaul rate allocation.

\item We develop a two-stage optimization that determines the backhaul scheduling for a given prefill length and then selects the prefill length, with a proof of global optimality.

\item Simulations show that ctHO reduces the worst-user \ac{HO} delay by up to $3.1\times$ and $3.7\times$ compared to \ac{cHO} and \ac{tHO}, respectively.
\end{itemize}

Token-based communication has been studied for multiple access \cite{Mahdi}, packetization \cite{Lee2025}, and multimodal transmission \cite{Qiao2025b}, but existing works assume static users. In contrast, we focus on \ac{HO}s of mobile users. Meanwhile, KV cache management, including compression \cite{liu2024} and cache-based inter-agent communication \cite{C2C, Chen2026}, has recently emerged. These techniques are orthogonal to our approach and can further reduce the Edge LLM \ac{HO} latency of the proposed framework.


\section{System Model and Problem Formulation}\label{sec:system_model}
\subsection{Multi-UE HO During Edge LLM Token Streaming}
We consider a mobile Edge LLM token streaming service in which a source \ac{BS} separately streams LLM-generated tokens to \(K\) UEs. During this service, we assume that the UEs move toward a target \ac{BS}, and that UE \(i\) triggers \ac{HO} to the target \ac{BS} at time \(\tau_i\), where both BSs deploy the same LLM architecture and parameters.
After \ac{HO} occurs, token streaming is stopped and the target \ac{BS} prepares the KV cache corresponding to the \(C_i\) tokens, where \(C_i\) denotes the number of tokens decoded for UE \(i\) up to \(\tau_i\). Streaming resumes only after the KV cache preparation is completed, and the resulting Edge LLM \ac{HO} delay is the objective to minimize.

As shown in Fig.~\ref{fig:schematic_illustration}, the target BS can rebuild a portion of the required KV cache by receiving tokens from the source \ac{BS} and performing prefill on them, while the remaining KV cache can be directly delivered via backhaul \cite{liu2024}. The delay for token transmission is assumed to be negligible compared to the KV cache transfer and prefill delays.
Because token streaming can resume only after both processes are completed, the LLM \ac{HO} delay is governed by the larger of the two delays. We therefore aim to minimize the LLM \ac{HO} delay by jointly determining the amount of KV cache processed by prefill and the backhaul capacity allocation across UEs. To this end, we assume that the source BS knows \(\{(\tau_i, C_i)\}_{i=1}^{K}\) and the prefill delay profile of the target BS.
Without loss of generality, we sort UEs by their \ac{HO} trigger times,
\begin{align}
\tau_1 \le \tau_2 \le \cdots \le \tau_K.
\end{align}

At the target BS, to reduce the prefill delay compared with sequentially performing separate prefills for individual UEs, batch prefill jointly processes the tokens of multiple UEs in a single batch prefill \cite{Zheng2024}.
Accordingly, the KV cache is reconstructed for a common prefix of length \(L\) for all UEs, where \(L\in[0,C_{\max}]\) and \(C_{\max}\triangleq \max_i C_i\).
Because the same input token sequence length must be used across the batch, if \(C_i<L\), only the available \(C_i\) tokens of UE \(i\) are involved in batch prefill, while the remaining token positions up to length \(L\) are zero-padded. Hence, the \(C_i\) tokens of UE \(i\) are split as
\begin{align}
n_i^{\textrm{(pf)}}(L) &\triangleq \min\{C_i,\,L\}, \\
n_i^{\textrm{(tx)}}(L) &\triangleq C_i - n_i^{\textrm{(pf)}}(L)
= \bigl(C_i-L\bigr)^{+},
\end{align}
where $(\cdot)^+=\max\{\cdot,0\}$, \(n_i^{\textrm{(pf)}}(L)\) is the number of tokens processed by prefill, and \(n_i^{\textrm{(tx)}}(L)\) is the number of remaining tokens whose corresponding KV cache is delivered from the source BS.

For backhaul transmission, the rates allocated to the UEs at time \(t\), denoted by \(r_i(t)\), are subject to the capacity constraint
\begin{align}
\sum_{i=1}^{K} r_i(t) \le R \;\;[\text{tokens/s}],\quad \forall t,
\end{align}
where \(R \triangleq R_{\mathrm{bh}}/s_{\mathrm{KV}}\) is the backhaul capacity normalized by the KV cache payload size per token \(s_{\mathrm{KV}}\). Thus, both \(R\) and \(r_i(t)\) are expressed in tokens/s to match the token-based quantities in the prefill model, such as \(L\) and \(C_i\).

\subsection{Worst-user LLM HO Delay Minimization}\label{sec:model_problem}
Let \(T^{\mathrm{(pf)}}(L)\) and \(T_i^{\mathrm{(tx)}}(L,r_i,C_i)\) denote the completion times of batch prefill and cache transfer for UE \(i\), respectively. Based on these completion times, the \ac{HO} prefill delay and the cache transfer delay are defined as
\begin{align}
D_i^{\mathrm{(pf)}}(L) &\triangleq T^{\mathrm{(pf)}}(L)-\tau_i,\\
D_i^{\mathrm{(tx)}}(L,r_i,C_i) &\triangleq T_i^{\mathrm{(tx)}}(L,r_i,C_i)-\tau_i,\label{eq:D_tx}
\end{align}
respectively. Since token decoding can resume only after both processes are completed and KV cache is reconstructed, the resulting LLM \ac{HO} delay of UE \(i\) is
\begin{align}
D_i(L,r_i,C_i) \triangleq \max\left\{D_i^{\mathrm{(pf)}}(L),\,D_i^{\mathrm{(tx)}}(L,r_i,C_i)\right\}.
\end{align}
Accordingly, the worst-user LLM \ac{HO} delay is defined as
\begin{align}
D(L,r) \triangleq \max_{i\in\{1,\dots,K\}} D_i(L,r_i,C_i).
\end{align}

We next specify the completion times of batch prefill and cache transfer. Let \(T_c\) denote the batch prefill cycle interval. Then batch prefill starts at the first cycle boundary after the latest \ac{HO} time \(\tau_{K}\), i.e., $t_s \triangleq \left\lceil \frac{\tau_{K}}{T_c}\right\rceil T_c.$
Hence, the batch prefill completion time is
\begin{align}
T^{\mathrm{(pf)}}(L) \triangleq t_s + p(L),
\end{align}
where \(p(L)\) denotes the batch prefill delay. The cache transfer completion time $T_i^{\mathrm{(tx)}}(L,r,C_i)$ is time at which the remaining KV cache corresponding to \(n_i^{\mathrm{(tx)}}(L)\) tokens has been fully delivered.
Accordingly,
\begin{align}
\hspace{-0.2pc}T_i^{\mathrm{(tx)}}(L,r_i,C_i)
\triangleq \inf\left\{t\ge \tau_i\!:\!\int_{\tau_i}^{t}\! r_i(u)\,du\!\ge\!n_i^{\mathrm{(tx)}}(L)\!\right\}.
\end{align}

Our goal is to jointly choose the shared prefill length and the backhaul rate allocation to minimize the worst-user LLM \ac{HO} delay over all UEs:
\refstepcounter{equation}\label{eq:P}
\begin{align}
{\mathscr P}:&\min_{0\le L\le C_{\max},\; r(\cdot)} \; D(L,r)\tag{\theequation a} 
\\&\hspace{1.7pc} {~{\text {s.t.}}}\hspace{0.3pc}\sum_i r_i(t)\le R.\hspace{-0.5pc}\tag{\theequation b}&& \label{eq:backhaul_constraint}
\end{align}
Although \({\mathscr P}\) is a joint optimization problem over \(L\) and \(r(\cdot)\), it can be solved in a step-wise manner without loss of optimality. Specifically, for each fixed \(L\), we first optimize \(r(\cdot)\), and then optimize \(L\). The following proposition shows that this decomposition is globally optimal.
To formalize this step-wise decomposition, define the value function
\begin{align}
V(L) \triangleq \min_{r(\cdot)} D(L,r).
\end{align}
Let $r^{\star}(L)\in\arg\min_{r(\cdot)} D(L,r)$ denote an optimal rate allocation policy for a given $L$ and the backhaul constraint so that $V(L)=D(L,r^*(L))$, and let $L^{\star} = \arg\min_{0\le L\le C_{\max}} V(L)$.
\begin{proposition}
\label{prop:global}
The pair $\big(L^{\star},\, r^{\star}(\cdot)\big)$ is a global optimum of~${\mathscr P}$.
\end{proposition}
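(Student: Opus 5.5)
The argument is the standard fact that a joint minimum equals the iterated minimum, $\min_{L,r} D(L,r)=\min_L \min_r D(L,r)$. The plan is to compare $D(L^\star,r^\star(L^\star))$ with an arbitrary feasible pair of $\mathscr P$. Throughout, $r^\star(\cdot)$ is read as the policy $L\mapsto r^\star(L)$, so the candidate solution is the pair $(L^\star, r^\star(L^\star))$.

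\emph{Feasibility.} By definition, $r^\star(L^\star)$ belongs to the feasible set of the inner problem at $L=L^\star$. That set is exactly the set of schedules satisfying \eqref{eq:backhaul_constraint}, and it does not depend on $L$. Also, $L^\star\in[0,C_{\max}]$. Hence $(L^\star,r^\star(L^\star))$ is feasible for $\mathscr P$, and its objective value is $D(L^\star,r^\star(L^\star))=V(L^\star)$.

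\emph{Optimality.} Take any feasible pair $(L,r)$ with $L\in[0,C_{\max}]$ and $r$ satisfying \eqref{eq:backhaul_constraint}. Since $r$ is feasible for the inner problem at this $L$, the definition of $V$ gives $D(L,r)\ge V(L)$. The definition of $L^\star$ then gives $V(L)\ge V(L^\star)$. Chaining these,
\begin{align}
D(L,r)\ \ge\ V(L)\ \ge\ V(L^\star)\ =\ D\big(L^\star,r^\star(L^\star)\big).
\end{align}
As $(L,r)$ was arbitrary, the candidate pair attains the infimum of $\mathscr P$ and is therefore a global optimum.

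\emph{Main obstacle: existence of the minimizers.} The chain above assumes that $r^\star(L)$ and $L^\star$ actually exist.

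For the inner problem at fixed $L$, I would first reduce to constant-rate allocations of the form $r_i(t)=\rho_i\mathbbm 1\{t\ge\tau_i\}$. These suffice because $D$ depends on $r_i$ only through its completion time, and allocations on nested intervals can be rescaled. The problem then becomes a finite-dimensional minimization over a compact set. Alternatively, I would exhibit the optimum directly via a deadline-feasibility argument: a target delay $d$ is achievable if and only if the cumulative demands meet $R$ times the available time windows, an EDF-type condition. Taking the smallest such $d$, the minimum is attained because the feasibility condition is closed in $d$.

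For the outer problem, $V$ is a maximum of $t_s+p(L)-\tau_i$ and a transfer term that is piecewise linear and continuous in $L$, since $(C_i-L)^+$ is continuous. If $p$ is continuous (e.g., nondecreasing and continuous), $V$ is continuous on the compact interval $[0,C_{\max}]$, so $L^\star$ exists by Weierstrass.

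If existence is not guaranteed, the same inequality chain still shows that the iterated infimum equals the joint infimum. The proposition then holds whenever the two minimizers exist, which is the hypothesis implicit in the statement.
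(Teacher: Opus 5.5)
Your proof is correct and is the same argument as the paper's: for any feasible $(L,r)$, $D(L,r)\ge V(L)\ge V(L^\star)=D(L^\star,r^\star(L^\star))$, and your explicit feasibility check (the backhaul constraint does not depend on $L$) is a harmless addition. The existence discussion goes beyond the paper, which simply presupposes the minimizers, but drop the claim that constant rates $r_i(t)=\rho_i\mathbf{1}\{t\ge\tau_i\}$ suffice: optimal schedules generally need to time-share the backhaul. For example, with $R=1$, $\tau_1=0$, $\tau_2=0.5$, $n_1^{(\mathrm{tx})}=1$, $n_2^{(\mathrm{tx})}=0.1$, serving UE~1 and then UE~2 at full rate gives worst delay $1$, whereas the best constant-rate allocation gives $1.1$, so rely on the deadline-feasibility route instead.
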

\begin{proof}
For any feasible $(L,r)$, by definition of $V(L)$ we have $V(L)\le D(L,r)$. Moreover, $D\big(L^{\star},r^{\star}(L^{\star})\big)=V(L^{\star})$ and $V(L^{\star})\le V(L)$ by optimality of $L^{\star}$. Hence,
\begin{equation}
D\big(L^{\star},r^{\star}(L^{\star})\big)=V(L^{\star}) \le V(L)\le D(L,r),
\end{equation}
which proves global optimality.
\end{proof}

\section{Step-wise Optimization of Backhaul Scheduling and Prefill Length}
\label{sec:solution}


Based on Proposition~\ref{prop:global}, we decompose ${\mathscr P}$ into the following two subproblems. \eqref{eq:P} is equivalently expressed as
\begin{align}
{\mathscr P}_1:\quad &V(L) \coloneq\min_{r(\cdot)} D(L,r) ,\quad \text{for a given }L,
\label{eq:P1}\\
{\mathscr P}_2:\quad &\min_{0\le L\le C_{\max}} V(L).
\label{eq:P2}
\end{align}
In the following, we first characterize $V(L)$ by solving ${\mathscr P}_1$ for any fixed $L$, then obtain the optimal $L$ by solving ${\mathscr P}_2$.

\subsection{Optimal Cache Transfer Delay and Prefill Length}
\label{sec:closedform}

To solve ${\mathscr P}_1$ for a fixed $L$, since prefill delay is independent of $r(\cdot)$, optimizing ${\mathscr P}_1$ reduces to minimizing the worst-user cache tramsfer delay $D^{(\mathrm{tx})}(L,r)$ as follows.
\begin{align}\label{eq:P1_split}
V(L)
&= \min_{r(\cdot)} D(L,r) \notag\\
&= \min_{r(\cdot)} \max_i \max\Bigl\{D_i^{\mathrm{(pf)}}(L),\, D_i^{\mathrm{(tx)}}(L,r_i,C_i)\Bigr\} \notag\\
&= \min_{r(\cdot)} \max\Bigl\{\max_i D_i^{\mathrm{(pf)}}(L),\, \max_i D_i^{\mathrm{(tx)}}(L,r_i,C_i)\Bigr\} \notag\\
&\triangleq \max\Bigl\{D^{\mathrm{(pf)}}(L),\, \min_{r(\cdot)} D^{\mathrm{(tx)}}(L,r)\Bigr\},
\end{align}
where $D^{\mathrm{(pf)}}\!(L)\!\triangleq\!\max_i\!D_i^{\mathrm{(pf)}}\!(L)$ and $ D^{\mathrm{(tx)}}(L,r)\!\triangleq\!\max_{i}\!D_i^{\mathrm{(tx)}}(L,r_i,C_i).$

The following proposition characterizes the minimum achievable cache transfer delay for a fixed \(L\) by examining the cumulative remaining KV caches over the first \(k\) UEs. To this end, define
\begin{align}
S_k(L)\triangleq \sum_{i=1}^{k} n_i^{\textrm{(tx)}}(L),\quad k\in\{1,\dots,K\},
\label{eq:Sk_def}
\end{align}
as the cumulative remaining token amount for the first \(k\) UEs.


\begin{figure}[t]
\centering
\includegraphics[width=0.95\columnwidth]{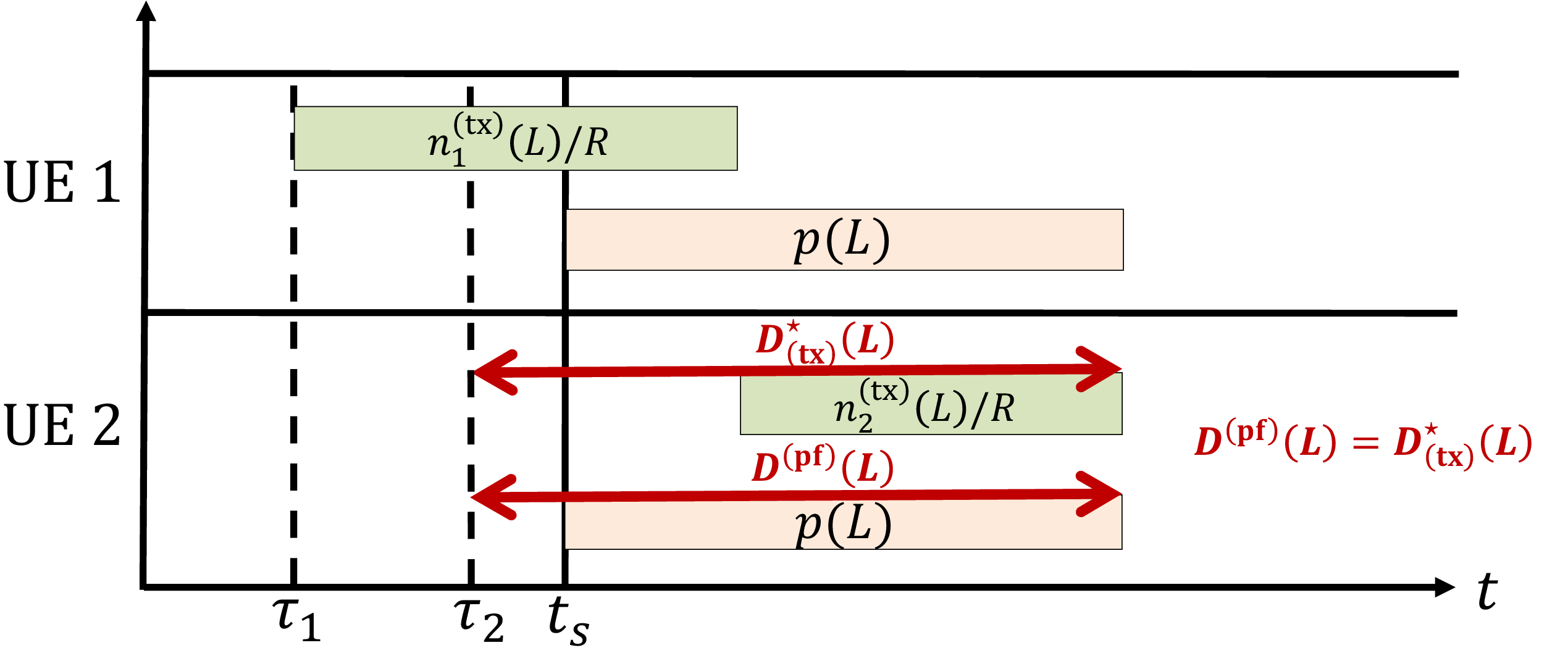}
\vspace{-0.3cm}
\caption{An illustration of the proposed design principle. The \ac{HO} prefill delay \(D^{(\mathrm{pf})}(L)\) and the minimum cache transfer delay $D_{(\mathrm{tx})}^{\star}(L)$ are set to equal by choosing the batch prefill length \(L\), so that neither process dominates the overall LLM \ac{HO} delay.}
\label{fig:ctHO_principle}
\end{figure}

\begin{proposition}
\label{prop:Dtx_closed}
For any fixed $L\in[0,C_{\max}]$, the minimum feasible cache transfer delay $D_{(\mathrm{tx})}^{\star}(L) \!\triangleq\!\min_{r(\cdot)}\!D^{(\mathrm{tx})}(L,r)\!$ is
\begin{align}
D_{(\mathrm{tx})}^{\star}(L)
=
\max_{k\in\{1,\dots,K\}}
\left[
\frac{S_k(L)}{R} - \big(\tau_k-\tau_1\big)
\right]^{+}.
\label{eq:Dtx_closed}
\end{align}
\end{proposition}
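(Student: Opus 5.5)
The plan is a matching lower bound and achievability argument for a common deadline. Write $D\ge 0$ for a candidate worst-user cache transfer delay. Then $D^{(\mathrm{tx})}(L,r)\le D$ holds if and only if, for every $i$, UE $i$ receives $n_i^{\textrm{(tx)}}(L)$ tokens within $[\tau_i,\tau_i+D]$. This is a single-server scheduling problem with release times $\tau_i$, deadlines $\tau_i+D$ ordered like the $\tau_i$, job sizes $n_i^{\textrm{(tx)}}(L)$, and server rate $R$. I will show that the smallest feasible $D$ equals the right-hand side of \eqref{eq:Dtx_closed}, denoted $D^\dagger$.

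\emph{Lower bound (converse).} Take any feasible $r(\cdot)$ with $D^{(\mathrm{tx})}(L,r)=D$ and fix $k$. For each $i\le k$, the data of UE $i$ is sent inside $[\tau_i,\tau_i+D]\subseteq[\tau_1,\tau_k+D]$, because $\tau_1\le\tau_i\le\tau_k$. Summing the per-UE requirements and applying \eqref{eq:backhaul_constraint} gives
\begin{align}
S_k(L)\le\sum_{i=1}^{k}\int_{\tau_1}^{\tau_k+D} r_i(u)\,du\le R\,(\tau_k+D-\tau_1).
\end{align}
Hence $D\ge S_k(L)/R-(\tau_k-\tau_1)$ for every $k$. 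Together with $D\ge0$, taking the maximum over $k$ yields $D\ge D^\dagger$.

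\emph{Achievability.} I will use the earliest-deadline-first policy. Since deadlines are ordered like the trigger times, this is first-come-first-served in the index $i$: at each $t$, all of $R$ goes to the smallest-index UE that has been triggered and still has remaining tokens. Let $F_k$ be the completion time of UE $k$ under this policy. I will show by induction on $k$ that
\begin{align}
F_k=\max\{\tau_k,\,F_{k-1}\}+n_k^{\textrm{(tx)}}(L)/R,
\end{align}
with $F_0=\tau_1$. Unrolling this recursion gives $F_k=\max_{j\le k}\bigl[\tau_j+(S_k-S_{j-1})/R\bigr]$. It then remains to check that $F_k-\tau_k\le D^\dagger$ for all $k$.

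The term $j=1$ gives exactly $S_k/R-(\tau_k-\tau_1)\le D^\dagger$. The main obstacle is the terms with $j>1$, which correspond to the link idling before $\tau_j$. My plan here is to use the fact that when such a term is the maximizer, the link must be busy continuously from $\tau_1$ up to $\tau_j$; otherwise a smaller index would realize the maximum. In the regime the paper considers, where the backhaul is saturated by the simultaneous handovers, this restart never happens after $\tau_1$, and the $j=1$ term dominates. I expect this step to need the no-idle (work-conserving from $\tau_1$) property. I will verify it by comparing $\tau_j$ with $F_{j-1}$, which reduces to $S_{j-1}/R\ge\tau_j-\tau_1$. This inequality is precisely where the argument may require the implicit busy-backhaul assumption.

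Combining the two directions, the policy achieves $D^{(\mathrm{tx})}=D^\dagger$, which is also a lower bound for every feasible schedule. Therefore $D_{(\mathrm{tx})}^{\star}(L)=D^\dagger$, and the earliest-deadline-first policy is the constructive rate schedule.
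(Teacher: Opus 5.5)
\paragraph{The gap is in the achievability step.} Your converse is the paper's argument, and your scheduler is the paper's EDF policy, so everything rests on achievability. You have put your finger on the right spot, but the inequality $S_{j-1}(L)/R\ge\tau_j-\tau_1$ that you plan to ``verify'' does not follow from the hypotheses. Without it the statement is false.

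Take $R=1000$ tokens/s, $L=0$, $C_1=1000$, $C_2=5000$, $\tau_1=0$ and $\tau_2=10$\,s. The right-hand side of \eqref{eq:Dtx_closed} equals $\max\{[1-0]^+,[6-10]^+\}=1$\,s. However, UE~2's $5000$ tokens can only be sent after $\tau_2$, at rate at most $R$. So every policy has $D^{(\mathrm{tx})}\ge 5$\,s, and the true optimum is $5$\,s.

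Your own recursion shows why. Write $F_k=S_k/R+\max_{j\le k}\bigl(\tau_j-S_{j-1}/R\bigr)$. An index $j>1$ can be the maximizer only if $F_{j-1}\le\tau_j$, i.e., the queue has already emptied by $\tau_j$. So your remark that a maximizing $j>1$ forces the link to be busy on $[\tau_1,\tau_j]$ is backwards.

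The paper's achievability sentence has the same hole, just unflagged. Work conservation only keeps the backhaul busy while $\mathcal{A}(t)\neq\emptyset$. Concluding that $R\bigl(D_{(\mathrm{tx})}^{\star}(L)+\tau_k-\tau_1\bigr)$ tokens are delivered by $\tau_k+D_{(\mathrm{tx})}^{\star}(L)$ tacitly assumes $\mathcal{A}(t)$ never empties after $\tau_1$.

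\paragraph{The fix is to change the statement, not the proof.} There are two ways to do it.

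First, you can add the saturation hypothesis $S_{j-1}(L)/R\ge\tau_j-\tau_1$ for $j=2,\dots,K$. The paper's later restriction to UEs whose cache cannot be delivered before batch prefill would supply this, if read as $\tau_i+n_i^{\textrm{(tx)}}(L)/R>t_s\ge\tau_K$. Under this hypothesis, the identity
\begin{align}
\frac{S_k-S_{j-1}}{R}-(\tau_k-\tau_j)=\Bigl[\frac{S_k}{R}-(\tau_k-\tau_1)\Bigr]-\Bigl[\frac{S_{j-1}}{R}-(\tau_j-\tau_1)\Bigr]
\end{align}
shows that every $j>1$ term is dominated by the $j=1$ term, and your argument is complete.

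Second, you can keep arbitrary trigger times and replace the formula by $\max_{1\le j\le k\le K}\bigl[(S_k-S_{j-1})/R-(\tau_k-\tau_j)\bigr]^{+}$. Your converse applies unchanged to UEs $j,\dots,k$ on the window $[\tau_j,\tau_k+D]$. Your FCFS completion times give $F_k-\tau_k=\max_{j\le k}\bigl[(S_k-S_{j-1})/R-(\tau_k-\tau_j)\bigr]$, so the corrected bound is attained exactly.
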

\begin{proof}
For any feasible rate allocation $r(\cdot)$, by the definition $D^{(\mathrm{tx})}(L,r) = \max_i D_i^{(\mathrm{tx})}(L,r_i,C_i)$ and \eqref{eq:D_tx}, every UE satisfies
\begin{align}\label{eq:tx_deadline_feas}
T_i^{(\mathrm{tx})}(L,r_i,C_i) \le \tau_i + D^{(\mathrm{tx})}(L,r), \quad \forall i \in \{1,\dots,K\}.
\end{align}
Since the KV cache of UE $i$ can be transmitted only after $\tau_i$, by time $\tau_k + D^{(\mathrm{tx})}(L,r)$ the system must have delivered at least the total KV cache amount corresponding to $S_k(L)$ tokens released up to $\tau_k$ to the first $k$ UEs.
On the other hand, under the constraint $\sum_i r_i(t) \le R$, the source BS can deliver the KV cache corresponding to at most $R\big(D^{(\mathrm{tx})}(L,r) + \tau_k - \tau_1\big)$ tokens over $[\tau_1,\, \tau_k + D^{(\mathrm{tx})}(L,r)]$.
Therefore,
\begin{align}
S_k(L) \le R\Big(D^{(\mathrm{tx})}(L,r) + \tau_k - \tau_1\Big), \,\, \forall k \in \{1,\dots,K\}.
\label{eq:feas_ineq}
\end{align}
Rearranging and combining with $D^{(\mathrm{tx})}(L,r) \ge 0$ yields
\begin{align}
D^{(\mathrm{tx})}(L,r) \ge \left[\frac{S_k(L)}{R} - \big(\tau_k - \tau_1\big)\right]^{+}, \quad \forall k,
\end{align}
which holds for every feasible $r(\cdot)$, so $D_{(\mathrm{tx})}^{\star}(L) \ge \max_k \left[\frac{S_k(L)}{R} - (\tau_k - \tau_1)\right]^{+}$. Achievability follows from the policy in Section~\ref{sec:constructive} that attains this bound.
\end{proof}

In the following, we consider only the UEs whose remaining KV cache cannot be fully delivered before batch prefill, since batch prefill is unnecessary for UEs whose remaining KV cache can already be fully delivered.
We next solve ${\mathscr P}_2$ by optimizing $L$ in \eqref{eq:P2}. Since $D^{(\mathrm{pf})}(L)$ is non-decreasing in $L$ whereas $D_{(\mathrm{tx})}^{\star}(L)$ is non-increasing in $L$, the minimum is achieved by choosing \(L\) such that the two terms are equal, whenever possible. The following proposition formalizes this condition.


\begin{proposition}
\label{prop:L_equalize}
If there exists an $L\in[0,C_{\max}]$ such that
\begin{align}
D^{(\mathrm{pf})}(L)= D_{(\mathrm{tx})}^{\star}(L),
\label{eq:equalize}
\end{align}
then any such $L$ minimizes $V(L)$ on $[0,C_{\max}]$. Otherwise, the minimizer occurs at a boundary, i.e., $L^{\star}\in\{0,C_{\max}\}$.
\end{proposition}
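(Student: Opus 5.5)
The argument rests on the max-decomposition \eqref{eq:P1_split} combined with Proposition~\ref{prop:Dtx_closed}, which give
\begin{align*}
V(L)=\max\bigl\{f(L),\,g(L)\bigr\},\qquad f(L)\triangleq D^{(\mathrm{pf})}(L),\quad g(L)\triangleq D_{(\mathrm{tx})}^{\star}(L).
\end{align*}
The first step is to establish the two monotonicities. We have $f(L)=t_s+p(L)-\tau_1$, which is non-decreasing in $L$ provided the prefill delay profile $p(\cdot)$ is non-decreasing, as is natural since longer inputs cannot be prefilled faster. For $g$, each $n_i^{(\mathrm{tx})}(L)=(C_i-L)^+$ is non-increasing in $L$, so every $S_k(L)$ is non-increasing. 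The maps $x\mapsto[x/R-(\tau_k-\tau_1)]^+$ are non-decreasing, and a maximum over $k$ of non-increasing functions is non-increasing. Hence $g$ is non-increasing.

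The core step is an elementary lemma: if $f$ is non-decreasing, $g$ is non-increasing, and $f(L_0)=g(L_0)$, then $L_0$ minimizes $\max\{f,g\}$. For $L\ge L_0$ we have $V(L)\ge f(L)\ge f(L_0)=V(L_0)$. For $L\le L_0$ we have $V(L)\ge g(L)\ge g(L_0)=V(L_0)$. Both cases hold for every such crossing point $L_0$, which gives the ``any such $L$'' clause. No continuity is needed for this direction.

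For the boundary case, suppose no crossing exists. The main subtlety is that, without continuity, $f-g$ could change sign with a jump and have no zero. I would first try to invoke continuity. The function $g$ is continuous in $L$, being built from piecewise-linear maps, $\max$ and $(\cdot)^+$. If $p$ is continuous as well, which I would state as an assumption on the prefill profile, then $h=f-g$ is continuous and non-decreasing. A continuous function with no zero has constant sign on $[0,C_{\max}]$, so either $f>g$ everywhere or $f<g$ everywhere. If $f>g$, then $V=f$ is non-decreasing and $L^\star=0$. If $f<g$, then $V=g$ is non-increasing and $L^\star=C_{\max}$.

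If $p$ is only monotone, say with batch-granularity steps, I would still handle the case by a threshold argument. Since $h$ is non-decreasing, set $L_0=\inf\{L: h(L)\ge 0\}$. If $h$ has constant sign, the boundary conclusion follows exactly as above. Otherwise the minimizer lies at the jump point $L_0$, and the statement's dichotomy should be read with the equality \eqref{eq:equalize} relaxed to a sign change of $h$. I would flag this as the only place where an extra regularity assumption on $p(\cdot)$ is genuinely required.
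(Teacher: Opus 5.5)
Your proof is correct and follows the paper's argument. You write $V=\max\{f,g\}$ with $f$ non-decreasing and $g$ non-increasing, and show a crossing point $L_0$ is optimal by bounding $V$ below by $g(L_0)$ on the left and by $f(L_0)$ on the right; otherwise $V$ coincides with $f$ or with $g$ on all of $[0,C_{\max}]$. You are more careful than the paper in two places: you verify the monotonicity of $g$, and you state the continuity of $g$ and $p$ explicitly (true for the affine $p(L)=aL+b$). That continuity is exactly what the paper silently needs to claim $f-g$ has constant sign when no crossing exists.

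One caveat on your aside about discontinuous $p$. $L_0$ is a minimizer only when $f(L_0)\le g(L_0)$, as with left-continuous, ceiling-type steps. If instead $p$ is right-continuous at the jump and $g$ is strictly decreasing just left of $L_0$, then $\inf V=g(L_0)$ is not attained at all. This lies outside the proposition as stated.
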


\begin{proof}
Let $f(L)\triangleq D^{(\mathrm{pf})}(L)$ and $g(L)\triangleq D_{(\mathrm{tx})}^{\star}(L)$, where $f$ is non-decreasing and $g$ is non-increasing on $[0,C_{\max}]$, and And according to \eqref{eq:P1_split}, $V(L)=\max\{f(L),g(L)\}$. 
If an $L_0$ satisfies $f(L_0)=g(L_0)$, then $V(L)=g(L)\ge g(L_0)$ for $L<L_0$ and $V(L)=f(L)\ge f(L_0)$ for $L>L_0$, so $L_0$ minimizes $V(L)$. 
Otherwise, either $f(L)\!>\!g(L)$ or $f(L)\!<\!g(L)$ for all $L$, and the minimum is attained at $L=0$ or $L=C_{\max}$, respectively.
\end{proof}

\subsection{Optimal Backhaul Rate Scheduling Algorithm}
\label{sec:constructive}
Proposition~\ref{prop:Dtx_closed} characterizes the optimal value \(D_{(\mathrm{tx})}^{\star}(L)\), whereas \({\mathscr P}_1\) also requires a feasible policy \(r(\cdot)\) that attains it. Among possibly many such optimal policies, we present a simple one that achieves \(D_{(\mathrm{tx})}^{\star}(L)\) by allocating the backhaul rate to one UE at a time.


Let \(\pi(t)\in\{1,\dots,K\}\) denote the backhaul scheduler, i.e., the index of the UE that is served by the backhaul at time \(t\). The scheduler is updated only when the active set changes, namely, when a new UE begins to be served at \(t=\tau_i\) or when the currently served UE completes its remaining KV cache transmission. Let $n_i^{\mathrm{(rem)}}(t,L) \triangleq
\left[n_i^{\mathrm{(tx)}}(L)-\int_{\tau_i}^{t} r_i(s)\,ds\right]^+$ denote the remaining token amount of UE \(i\) at time \(t\). Also, let
\begin{align}
\mathcal{A}(t)\triangleq \left\{
i\in\{1,\dots,K\}: t\ge \tau_i,\; n_i^{\mathrm{(rem)}}(t,L)>0
\right\}
\end{align}
denote the active set of UEs whose \ac{HO} has already occurred but whose KV cache transfer has not yet been completed. For a given \(L\), we define the target cache transfer completion time of UE \(i\) as
$d_i(L)\triangleq \tau_i + D_{(\mathrm{tx})}^{\star}(L)$.
At any time $t$, the scheduler $\pi(t)$ selects the UE $i$ with the smallest $d_i(L)$ among $\mathcal{A}(t)$ and allocates the entire backhaul capacity to that UE, i.e.,
\begin{align}
\pi(t)&=\arg\min_{i\in\mathcal{A}(t)} d_i(L),\\
r_i(t)&=R\cdot \mathbf{1}\{\pi(t)=i\}.
\end{align}
The full backhaul capacity is always assigned to a single UE, and $\pi(t)$ changes only when $\mathcal{A}(t)$ changes (either at $t=\tau_i$ or when $n_{\pi(t)}^{\textrm{(rem)}}(t,L)=0$).

Under the policy above, the backhaul is never idle whenever $\mathcal{A}(t)\neq\emptyset$, and it always serves the UE with the smallest $d_i(L)$ among $\mathcal{A}(t)$.
Hence, by time $\tau_k+D_{(\mathrm{tx})}^{\star}(L)$ the policy can transmit at least $R\big(D_{(\mathrm{tx})}^{\star}(L)+\tau_k-\tau_1\big)$ that is available up to each $\tau_k$, which is sufficient to complete $S_k(L)$ for every $k$. 
This construction yields a feasible $r(\cdot)$ that attains $D_{(\mathrm{tx})}^{\star}(L)$ for the fixed $L$, completing the solution of ${\mathscr P}_1$.

\section{Simulation Results}\label{sec:simulation_results}

\subsection{Simulation Setup}\label{sec:sim_setup}
We consider a multi-UE \ac{HO} scenario with $K=4$ UEs. For each UE \(i\), $C_i$ is uniformly sampled from $\mathrm{Unif}[1024,C_{\max}]$ tokens. $T_{\mathrm{c}}$ is set as $=0.01$~s. A 1D line-network model is adopted with two BSs located at $x^{(\mathrm{s})}=0$ (source) and $x^{(\mathrm{t})}=D_{\mathrm{bs}}$ (target), where $D_{\mathrm{bs}}=300$\,m\cite{Neetu2023}. Each UE starts from \(x_i(0)\sim \mathrm{Unif}[120,130]\)\,m and moves with a constant speed \(v_i=20\)\,m/s, so that its position evolves as \(x_i(t)=x_i(0)+v_i t\). \ac{HO} is triggered when the UE crosses the boundary \(x_{\mathrm{b}}=150\)\,m, which gives the \ac{HO} time $\tau_i=\frac{x_{\mathrm{b}}-x_i(0)}{v_i}.$ This 1D line-network and mobility setting abstracts a road-segment scenario where vehicles move along a line and sequentially associate with nearby access points deployed along the roadside.

The KV cache payload size per token is 
$s_{\mathrm{KV}}
= 2\cdot N_{\ell}\cdot N_{\mathrm{kv}}\cdot d_h \cdot q\;\text{bits}$,
where the leading factor $2$ accounts for the key and value tensors, $N_{\ell}$ is the number of Transformer layers, $N_{\mathrm{kv}}$ is the number of key-value heads, $d_h$ is the head dimension, and $q$ is the per-element precision.
For LLM, we adopt \texttt{Qwen2.5-7B-Instruct} in which $(N_{\ell},N_{\mathrm{kv}},d_h)=(28,4,128)$ and $q=16$ bits.
This yields \(s_{\mathrm{KV}}=458{,}752\) bits per token, so that when \(C=3072\) tokens, the total KV cache size is \(176\) MB.
We use prefill model \(p(L)=aL+b\), where $a,b$ are set as $9.4267\times 10^{-5}$ and $2.4\times10^{-3}$ by default.


\begin{figure*}[t]
    \centering
    \subfloat[Worst-user LLM HO delay vs. backhaul rate \ensuremath{R_{\mathrm{bh}}} (Gbps).]{
        \includegraphics[width=0.23\textwidth]{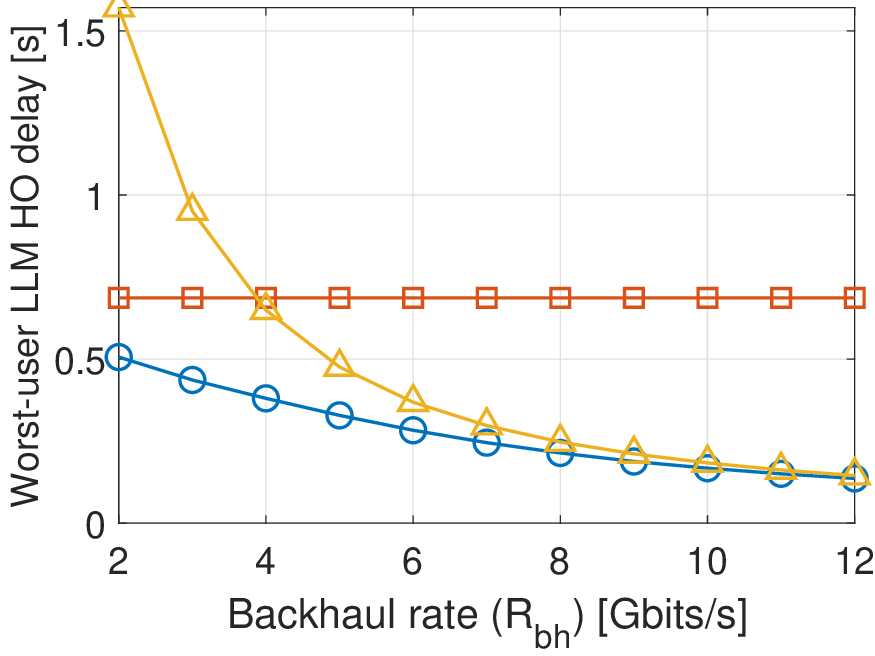}
        \label{fig:plot_rate_sweep_v1}
    }\hfill
    \subfloat[Worst-user LLM HO delay vs. prefill speed \ensuremath{1/a} with \ensuremath{R_{\mathrm{bh}}=4.5}\,Gbps.]{
        \includegraphics[width=0.23\textwidth]{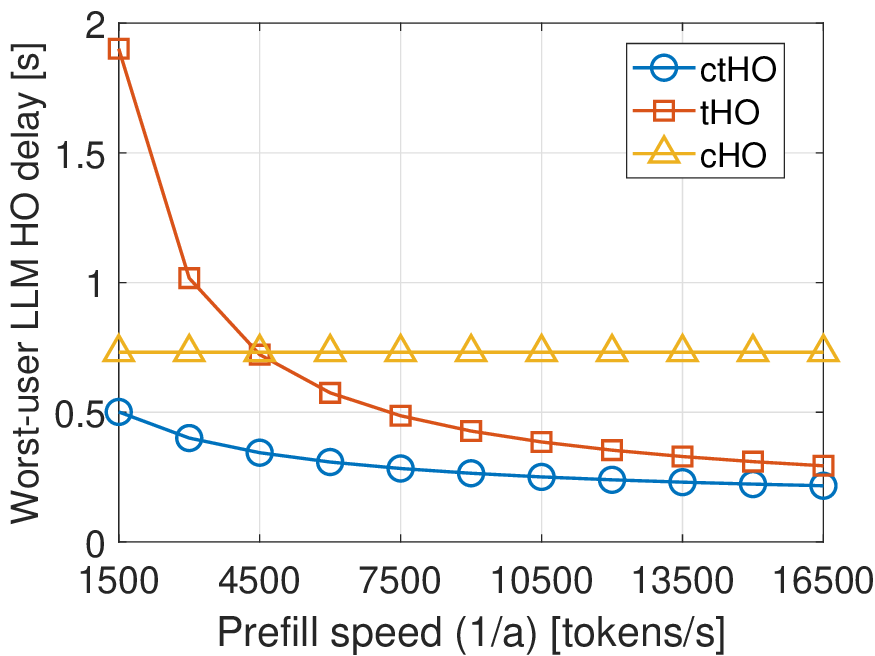}
        \label{fig:plot_compute_sweep_v1}
    }\hfill
    \subfloat[Worst-user LLM HO delay vs. the maximum cache size \ensuremath{C_{\max}} with \ensuremath{R_{\mathrm{bh}}=4.5}\,Gbps.]{
        \includegraphics[width=0.23\textwidth]{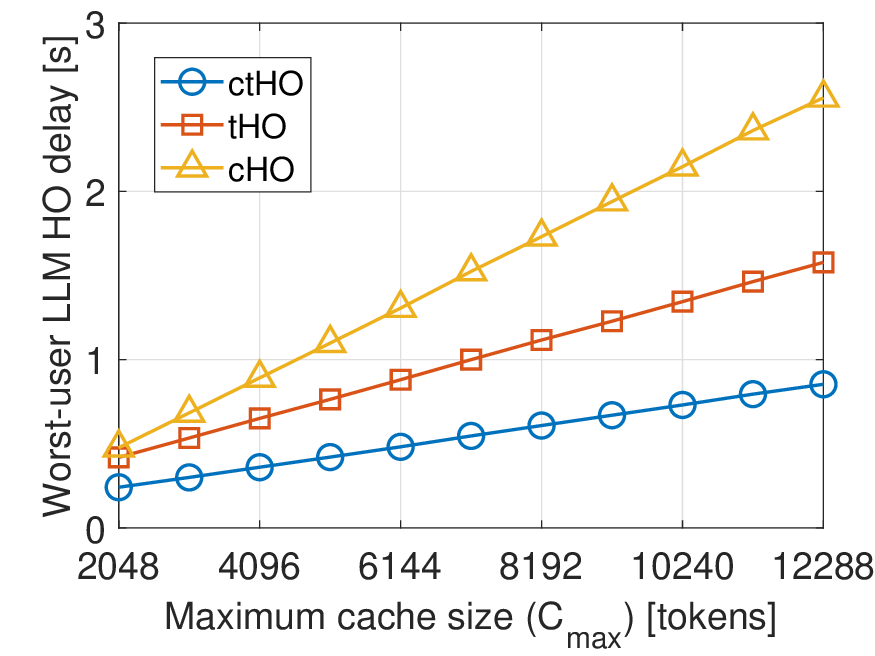}
        \label{fig:plot_cache_sweep_v1}
    }\hfill
    \subfloat[Worst-user LLM HO delay vs. the number of UEs \ensuremath{K} with \ensuremath{R_{\mathrm{bh}}=4.5}\,Gbps.]{
        \includegraphics[width=0.23\textwidth]{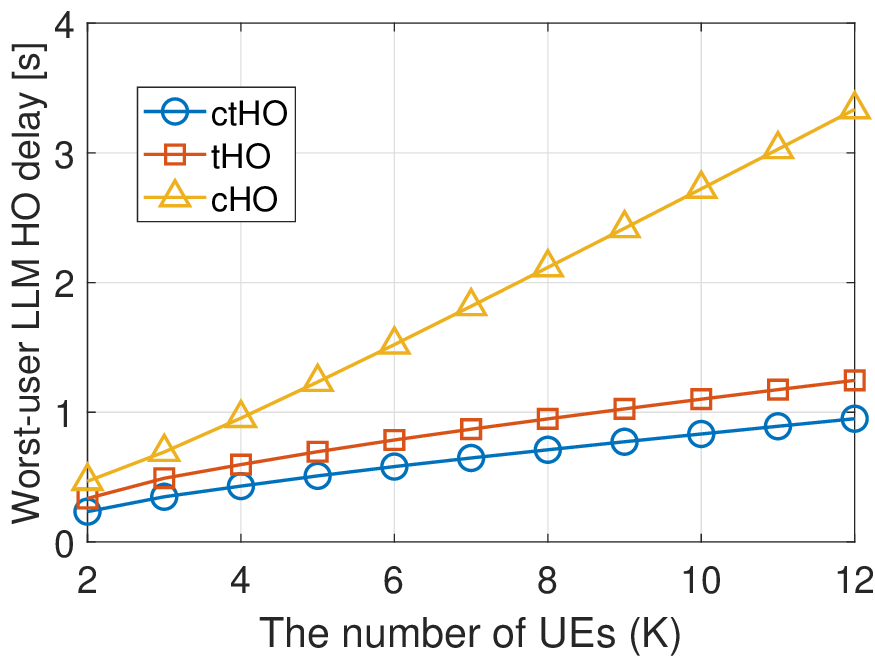}
        \label{fig:plot_K_sweep_v1}
    }
    \caption{Worst-user LLM HO delay comparison of \ac{ctHO}, tHO, and cHO under different system parameters.}
    \label{fig:sweep_quad}
\end{figure*}


\subsection{Impact of Backhaul Transmission Rate, Prefill Speed, Cache Size, and Number of UEs}\label{sec:Rsweep}
In this section, we compare \ac{ctHO} with \ac{tHO} and \ac{cHO}.
\begin{align}
&\textit{tHO:}\ L=C_{\max} \Rightarrow\ n_i^{\textrm{(pf)}}(L)\!=\!C_i,\ n_i^{\textrm{(tx)}}(L)\!=\!0, \forall i,\\
&\textit{cHO:}\ L=0 \Rightarrow\ n_i^{\textrm{(pf)}}(L)\!=\!0,\ n_i^{\textrm{(tx)}}(L)\!=\!C_i, \forall i.
\end{align}
where \ac{tHO} relies entirely on batch prefill, whereas \ac{cHO} relies entirely on KV cache transmission over backhaul.
Fig.~3(a) shows the average LLM \ac{HO} delay by changing backhaul rate \(R_{\mathrm{bh}}\). \ac{tHO} remains unchanged since it does not rely on backhaul KV cache transfer. By contrast, the delay of \ac{cHO} decreases as \(R_{\mathrm{bh}}\) increases and approaches that of \ac{ctHO} in the high backhaul regime, whereas at \(R_{\mathrm{bh}}=2\)Gbps, \ac{cHO} exhibits more than \(3.1\times\) larger delay than \ac{ctHO}.


Fig.~3(b) varies the compute capability by changing $a$ in the prefill delay model $p(L)=aL+b$. \ac{cHO} is unchanged because it does not use prefill. \ac{tHO} improves as $a$ decreases since faster prefill reduces prefill delay. \ac{ctHO} consistently achieves the smallest delay by selecting $L^{\star}$ according to the compute condition: when prefill is slow (large $a$), it reduces $L^{\star}$ and relies more on backhaul transmission, and when prefill becomes faster (small $a$), it increases $L^{\star}$ to reduce the backhaul load.



In Fig.~3(c), the worst-user LLM \ac{HO} delay increases with the maximum cache size $C_{\max}$, since a larger context requires more computation and transmission before streaming can resume. Across the entire sweep, \ac{ctHO} consistently achieves the lowest delay by jointly leveraging prefill and KV cache transmission.

Fig.~3(d) shows the worst-user LLM HO delay as \(K\) increases. As \(K\) increases, the delay of all methods increases because more UEs must share the same backhaul and prefill resources. Among the three schemes, \ac{ctHO} consistently achieves the smallest delay over the entire range of \(K\). The performance gap becomes more pronounced as \(K\) increases. For example, at \(K=12\), the LLM \ac{HO} delay of \ac{ctHO} is about \(0.95\) s, compared with \(1.25\) s for \ac{tHO} and \(3.35\) s for \ac{cHO}.

\subsection{Comparison of Total Streaming Delay With and Without HO}
\label{sec:sim_wireless_noho}
\begin{figure}[t]
\centering
\includegraphics[width=0.5\columnwidth]{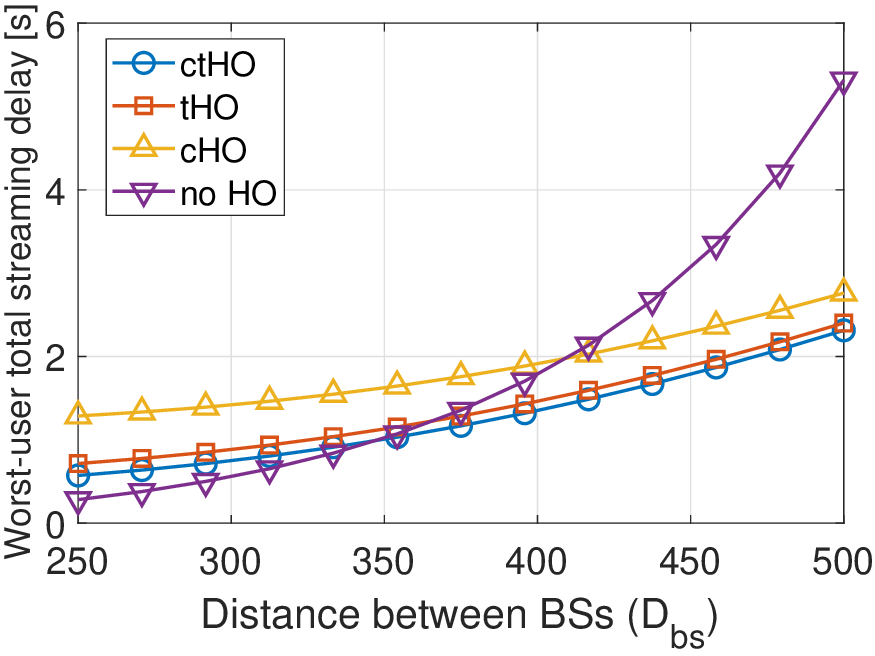}
\caption{Worst-user total streaming delay vs. distance between BSs with $R_{\mathrm{bh}}=4.5$\,Gbps. The total streaming delay includes both the LLM \ac{HO} delay and the subsequent streaming delay to deliver generated $G$ tokens after LLM \ac{HO}.}

\label{fig:plot_G_sweep_v1}
\end{figure}

To compare the delay with and without \ac{HO}, we evaluate the worst-user total streaming delay which includes LLM \ac{HO} delay and token streaming delay after LLM \ac{HO}, by sweeping $D_{\mathrm{bs}}$, while fixing the number of newly generated tokens after LLM \ac{HO} to \(G=1024\) tokens.
The \ac{SNR} of UE \(i\) at time \(t\) is modeled as a function of the distance between the UE position \(x_i(t)\) and the serving \ac{BS} location \(x^{(\mathrm{b})}\), where \(\mathrm{b}=\mathrm{s}\) with \(x^{(\mathrm{s})}=0\) denotes the no-\ac{HO} case and \(\mathrm{b}=\mathrm{t}\) with \(x^{(\mathrm{t})}=D_{\mathrm{bs}}\) denotes the \ac{HO} case:
\begin{align} \label{eq:snr}
\gamma_i^{(\mathrm{b})}(t)
= \gamma_{\mathrm{ref}}\Big(\frac{d_{\mathrm{ref}}}{|x_i(t)-x^{(\mathrm{b})}|}\Big)^{\alpha},
\end{align}
where \((\gamma_{\mathrm{ref}},d_{\mathrm{ref}},\alpha)=(10~\mathrm{dB},20~\mathrm{m},3.5)\). The corresponding wireless token streaming rate for each token is given by
\begin{align}
r^{(\mathrm{b})}\!\big(t\big)
\triangleq \frac{W \log_2\!\big(1+\gamma_i^{(\mathrm{b})}(t)\big)}{s_{\mathrm{tok}}} [\text{tokens/s}],
\end{align}
where a token size ${s_{\mathrm{tok}}} = $12 bits/token and \(W=2\) MHz.

The total streaming delay measured from \(\tau_i\) is defined as
\begin{align}
D_i(L,r,G)
\triangleq \big(T_i(L,r)-\tau_i\big) + D_i^{(\mathrm{st})}\!\big(G,r^{(\mathrm{b})}\!(t)\big) [\text{s}],
\label{eq:Di_total_ho}
\end{align}
where \(D_i^{(\mathrm{st})}(G;\gamma)\) denotes the delay required to stream newly generated tokens after LLM \ac{HO}. In the no-\ac{HO} case, the LLM \ac{HO} delay vanishes, and the total delay reduces to \(D_i^{(\mathrm{st})}\!\big(G,r^{(\mathrm{b})}\!(t)\big)\). We report the worst-user total streaming delay \(D^{(\mathrm{tot})}(D_{\mathrm{bs}})\triangleq \max_i D_i(L,r,G)\) while sweeping \(D_{\mathrm{bs}}\).


As \(D_{\mathrm{bs}}\) increases, the no-\ac{HO} delay grows much faster than the \ac{HO}-based methods because degraded link increases streaming delay. In particular, a larger \(D_{\mathrm{bs}}\) leads to a more severe SNR drop on the link between the source \ac{BS} and UEs. 
Although no-\ac{HO} achieves a smaller delay when \(D_{\mathrm{bs}}<375\) m due to the absence of LLM \ac{HO} delay, as \(D_{\mathrm{bs}}\) increases, \ac{HO}-based methods yield lower delay.
At \(D_{\mathrm{bs}}=500\) m, the no-\ac{HO} delay exceeds that of \ac{ctHO} by more than \(2.1\) s. Overall, \ac{ctHO} consistently achieves the smallest worst-user delay among the \ac{HO}-based methods over the entire $D_{\mathrm{bs}}$ range.

\section{Conclusion}\label{sec:conclusion}
This paper investigated mobility-aware Edge LLM \ac{HO} with multi-UE token streaming, where the target BS must recover the KV cache. We proposed \ac{ctHO}, a joint method that optimizes the amount to prefill and KV cache backhaul rate, and developed a step-wise solution that selects the prefill length and backhaul scheduling to minimize the worst-user LLM \ac{HO} delay. Simulations show that \ac{ctHO} consistently outperforms other baselines under the simulation setting.
Future work includes extending the current hard LLM \ac{HO} framework to a soft LLM \ac{HO} setting, where the target BS performs pre-computation and the source BS continues decoding for more seamless service continuity.

\bibliographystyle{IEEEtran}
\bibliography{ref}

@inproceedings{liu2024,
  title={Cachegen: Kv cache compression and streaming for fast large language model serving},
  author={Liu, Yuhan and Li, Hanchen and Cheng, Yihua and Ray, Siddhant and Huang, Yuyang and Zhang, Qizheng and Du, Kuntai and Yao, Jiayi and Lu, Shan and Ananthanarayanan, Ganesh and others},
  booktitle={Proceedings of the ACM SIGCOMM 2024 Conference},
  pages={38--56},
  year={2024}
}

@inproceedings{Jang25,
  title={Edge-first language model inference: Models, metrics, and tradeoffs},
  author={Jang, SiYoung and Morabito, Roberto},
  booktitle={2025 IEEE 45th International Conference on Distributed Computing Systems Workshops (ICDCSW)},
  pages={309--314},
  year={2025},
}

@misc{Yet2026,
      title={{SLA}-Aware Distributed LLM Inference Across Device-RAN-Cloud}, 
      author={Hariz Yet and Nguyen Thanh Tam and Mao V. Ngo and Lim Yi Shen and Lin Wei and Jihong Park and Binbin Chen and Tony Q. S. Quek},
      year={2026},
      eprint={2602.23722},
      archivePrefix={arXiv},
      primaryClass={cs.NI},
      url={https://arxiv.org/abs/2602.23722}, 
}

@techreport{WG3,
title={{AI}-on-{RAN}: Enabling Monetizable Differentiated Connectivity for AI},
author= {{AI-RAN Alliance}},
journal={Tech. Rep.},
year={2026},
url={https://ai-ran.org/documents/AI-RAN-WG3-AI-on-RAN-Whitepaper.pdf}, 
}

@misc{Mahdi,
      title={Token-Domain Multiple Access: Exploiting Semantic Orthogonality for Collision Mitigation}, 
      author={Li Qiao and Mahdi Boloursaz Mashhadi and Zhen Gao and Deniz Gündüz},
      year={2025},
      eprint={2502.06118},
      archivePrefix={arXiv},
      primaryClass={cs.IT},
      url={https://arxiv.org/abs/2502.06118}, 
}

@article{Lee2025,
  title={Low-complexity semantic packet aggregation for token communication via lookahead search},
  author={Lee, Seunghun and Park, Jihong and Choi, Jinho and Park, Hyuncheol},
  journal={arXiv preprint arXiv:2506.19451},
  year={2025}
}

@article{C2C,
  title={Cache-to-cache: Direct semantic communication between large language models},
  author={Fu, Tianyu and Min, Zihan and Zhang, Hanling and Yan, Jichao and Dai, Guohao and Ouyang, Wanli and Wang, Yu},
  journal={arXiv preprint arXiv:2510.03215},
  year={2025}
}

@inproceedings{Chen2026,
  title={Federated Inference for Heterogeneous {LLM} Communication and Collaboration},
  author={Chen, Zihan and Li, Zeshen and Yang, Howard Hao and Quek, Tony and Park, Jihong},
  booktitle={AAAI 2026 Workshop on Machine Learning for Wireless Communication and Networks (ML4Wireless)},
  year={2026}
}

@article{Qiao2025b,
  title={Token communications: A large model-driven framework for cross-modal context-aware semantic communications},
  author={Qiao, Li and Mashhadi, Mahdi Boloursaz and Gao, Zhen and Tafazolli, Rahim and Bennis, Mehdi and Niyato, Dusit},
  journal={IEEE Wireless Communications},
  volume={32},
  number={5},
  pages={80--88},
  year={2025},
  publisher={IEEE}
}

@article{Zheng2024,
  title={Batch{LLM}: Optimizing large batched {LLM} inference with global prefix sharing and throughput-oriented token batching},
  author={Zheng, Zhen and Ji, Xin and Fang, Taosong and Zhou, Fanghao and Liu, Chuanjie and Peng, Gang},
  journal={arXiv preprint arXiv:2412.03594},
  year={2024}
}

@article{Neetu2023,
  title={Performance analysis of cache-enabled handover management for vehicular networks},
  author={Neetu, RR and Ghatak, Gourab and Bohara, Vivek Ashok and Srivastava, Anand},
  journal={IEEE Transactions on Network Science and Engineering},
  volume={11},
  number={1},
  pages={1151--1164},
  year={2023},
  month={Oct.},
  publisher={IEEE}
}
\end{document}